\documentclass[12pt]{extarticle}
\usepackage{extsizes}
\usepackage[reqno]{amsmath} 
\usepackage{amsmath}
\usepackage{amssymb}
\usepackage{amsthm}
\usepackage{amscd}
\usepackage{amsfonts}
\usepackage{graphicx}
\usepackage{dsfont}
\usepackage{fancyhdr}
\usepackage{enumerate}
\usepackage{youngtab}
\usepackage{braket}
\usepackage[ruled,vlined]{algorithm2e}
\usepackage[utf8]{inputenc}
\usepackage{comment}
\usepackage{cite}
\usepackage{tikz}
\usepackage{subcaption}

\usetikzlibrary{arrows.meta}

\newcommand{\Description}[1]{}

\usepackage[margin=1.5in]{geometry}
\usepackage{graphicx}
\usepackage{color}
\usepackage[hidelinks]{hyperref}

\theoremstyle{theorem}
\newtheorem{corollary}{Corollary}[section]

\newtheorem{proposition}[corollary]{Proposition}
\newtheorem{theorem}[corollary]{Theorem}

\begin{document}

\title{A SWAP-free Framework for QAOA}
\author{Thiago Assis\thanks{TA, PB, LL, GC at the Department of Computer Science, UFMG - Belo Horizonte, Brazil. [gabriel, thiagoassis, lailalopes, pedro.baptista@dcc.ufmg.br]. DF at Inter Science - Belo Horizonte, MG, Brazil. [diego.ferreira@inter.co]} \and Pedro Baptista\footnotemark[1] \and Laila Lopes\footnotemark[1] \and Diego Ferreira\thanks{DF at Inter Science - Belo Horizonte, MG, Brazil. [diego.ferreira@inter.co]} \and Gabriel Coutinho\footnotemark[1]}
\date{\today}
\maketitle
\vspace{-0.8cm}

\begin{abstract}
  The performance of the Quantum Approximate Optimization Algorithm (QAOA) on noisy intermediate-scale quantum (NISQ) devices is strongly limited by sparse qubit connectivity. When interactions required by QAOA Hamiltonians are not aligned to the hardware topology, transpilation introduces SWAP gates, increasing circuit depth and noise. We propose a SWAP-free QAOA framework based on modifying the cost Hamiltonian so that it can be implemented natively on the hardware. We formulate this as a mixed-integer semidefinite program (MISDP) that selects a hardware-compatible approximation of the original cost matrix and optimizes the allocation of logical variables to physical qubits. We prove that the associated decision problem is NP-complete and derive theoretical guarantees relating the MISDP objective to the loss in the original optimization problem through the Lov\'asz number of the hardware graph. Since solving MISDPs is practical only for small instances, we introduce heuristics based on spectral properties of the problem matrix and hardware graph. Our experiments on a cardinality-constrained quadratic optimization model for index tracking show competitive performance against a baseline representing ideal QAOA under SWAP-induced noise. These results indicate that, on sparse NISQ architectures, a hardware-aware approximation of the objective may be more effective than an exact but heavily transpiled Hamiltonian implementation.
\end{abstract}

\begin{center}
    \textbf{Keywords}\\
    Quantum Approximate Optimization Algorithm, Near-term quantum algorithms, Semidefinite programming, k-medoids
\end{center}

\section{Introduction}

In the near term, one of the most anticipated applications of quantum computing is solving Quadratic Unconstrained Binary Optimization (QUBO) problems, an NP-hard class with applications in many areas, including finance, logistics and chemistry \cite{qubo-ising,boulebnane2023peptide, qubo-applications, quantum-finance-survey}. A prominent quantum approach for the solution of such problems is the Quantum Approximate Optimization Algorithm (QAOA), which seeks high-quality approximate solutions through a variational quantum-classical loop \cite{qaoa,qaoa-analysis}.

However, one of the principal barriers to using QAOA lies in the restrictive nature of today's Noise Intermediate Scale Quantum (NISQ) hardware \cite{nisq}. When qubit interactions required by the QAOA circuit do not align with the hardware's topology, the quantum transpiler must insert SWAP operations to route qubits, increasing the circuit depth and consequently the accumulated error. This issue is exacerbated by the sparse connectivity of current quantum hardware, which will usually require a quadratic number of SWAP gates for QAOA transpilation even under optimized settings \cite{hashim2022optimized}.

To overcome this hardware-imposed limitation, we introduce a framework for the design of SWAP-free QAOA circuits. The core idea is to modify the QAOA Hamiltonian such that it only includes interactions between qubits that are directly connected in the hardware connectivity graph, at the cost of potentially reduced solution quality. This approach eliminates the need for SWAP gates and thus reduces circuit depth and error accumulation, and despite the fact that the modified Hamiltonian is not an exact representation of the original problem, our experimental results show that this can lead to improved results on NISQ devices.

Our approach is formalized through a Mixed-Integer Semidefinite Programming (MISDP) formulation (see refs. \cite{de2024integrality, gally2018framework}), which optimizes the choice of interactions in the QAOA Hamiltonian while ensuring that the resulting circuit is SWAP-free. Our formulation also addresses the qubit allocation problem \cite{siraichi2018qubit,zhu2020exact} by allowing the optimization over permutation matrices that encode relabeling of qubits, allowing us to find the best mapping of logical qubits to physical qubits in the hardware.

We offer a theoretical guarantee on the optimum solution trade-off, and we also propose heuristics to solve the MISDP efficiently for large problem instances, making our framework practical for real-world applications.

To benchmark the robustness of our framework against a broad class of quadratic binary optimization problems, we showcase our method applied to the $k$-medoids problem, including experiments on real-world data from the index-tracking problem from finance. The $k$-medoids problem is NP-hard, and is modelled as a quadratic binary optimization problem with cardinality constraints that could usually require a fully connected network of qubits for a direct QAOA implementation. 

Our approach is, however, general and can be used in other quadratic binary optimization problems with or without cardinality constraints, such as MAXCUT \cite{sankar2024benchmarking}.

\section{Background on the problem}

For completeness, we first review the QAOA algorithm and the $k$-medoids problem applied to index-tracking. A quadratic unconstrained binary optimization (QUBO) problem can be expressed as
\begin{equation}
    \min \{ x^T C x : x \in \{0,1\}^n \}, 
\end{equation}
with $C$ an $n \times n$ matrix, which may be assumed to be symmetric. To apply QAOA, we map the objective function to a cost Hamiltonian:
\begin{equation}
    H_{C} = \sum_{i \neq j} J_{ij} Z_i Z_j + \sum_i h_i Z_i,
\end{equation}
where $Z_i$ is the Pauli-$Z$ operator on qubit $i$, and the constants $J_{ij}$ and $h_i$ are defined as
\begin{equation}
J_{ij} = \frac{C_{ij}}{4} \quad \text{and} \quad h_i = -\frac{ C_{ii}}{2} - \sum_{j \neq i} \frac{C_{ij}}{4}.
\end{equation}
By construction, for any binary vector $x \in \{0,1\}^n$,
\begin{equation}
    x^T C x = \bra{x} H_{C} \ket{x} + c_0,
\end{equation}
for a constant term $c_0$ dependent on $C$. The QAOA algorithm then proceeds by preparing a parameterized quantum state through alternating applications of the cost Hamiltonian $H_C$ and a mixing Hamiltonian, which is typically chosen as $H_M = \sum_i X_i$, where $X_i$ is the Pauli-$X$ operator on qubit $i$.

In our setting we intend to solve a quadratic binary optimization problem with cardinality constraints, which can be expressed as
\begin{equation} \label{eq:cardinality-constrained}
    \min \{ x^T C x : x \in \{0,1\}^n,\, \mathbf{1}^T x = k \}, 
\end{equation}
for some chosen $k$, and where $\mathbf{1}$ stands for the all $1$s vector. As noted in \cite{xy-mixers}, $XY$ mixers are advantageous for cardinality-constrained problems because they preserve the Hamming weight of the state, so we adopt the mixer
\begin{equation}
    H_M = \frac 1 2 \sum_{ij \in E(G)} \left( X_i X_j + Y_i Y_j \right),
\end{equation}
where $G$ is the graph that describes the hardware native capability of implementing gates involving the qubits, hereby referred to as the \emph{hardware graph}. The initial state is chosen as the Dicke State $\ket{D_k^n}$, i.e., the uniform superposition over all feasible solutions, which correspond to binary strings $x$ such that $\mathbf 1^T x = k$. It is given by
\begin{equation}
    \ket{D_k^n} = \frac{1}{\sqrt{\binom{n}{k}}} \sum_{\mathbf 1^T x = k} \ket{x}.
\end{equation}
The QAOA consists of applying $\ell$ alternating mixer and cost layers of the form
\begin{equation}
    U(\gamma, \beta) = \prod_{j=1}^\ell e^{-i \beta_j H_M} e^{-i \gamma_j H_{C}},
\end{equation}
and then measuring the resulting state to obtain a candidate solution. The parameters $\gamma$ and $\beta$ are optimized through a classical optimization loop. We do not review the details of the optimization loop here, but several approaches have been proposed in the literature, including gradient-based methods and heuristic optimization algorithms \cite{qaoa-analysis}.

As for the main point of this work, the mixer Hamiltonian $H_M$ is designed to be compatible with the hardware graph. The cost Hamiltonian $H_C$ may include interactions between qubits that are not directly connected in the hardware, leading to the need of SWAP gates to implement the required interactions, which increases circuit depth and error rates. Our framework aims to modify the cost Hamiltonian to only include interactions that can be implemented without SWAP gates, while still providing a strong approximation to the original problem.

We point out that an efficient Dicke-state preparation algorithm, proposed in \cite{dicke-state}, does not fully respect hardware topology and may introduce SWAP gates. However, unlike the cost Hamiltonian, this SWAP overhead does not scale with the number $\ell$ of QAOA layers, so we retain it in the framework.

The index-tracking problem is a portfolio optimization problem. The goal is to select a subset of assets that closely tracks the performance of a given financial index, while also satisfying constraints on the number of assets selected. The authors in \cite{marketgraph-digital-annealing} propose a solution for index tracking well-suited for quadratic binary optimization with cardinality constraints, inspired by a $k$-medoid clustering technique introduced in \cite{qubo-kmedoids}. The problem is modelled as
\begin{equation}
    \min_x \ \left\{\beta \mathbf 1^T C x - \frac{\alpha}{2} x^T C x : x \in \{0, 1\}^n, \, \mathbf{1}^T x = k\right\},
\end{equation}
where $C$ is a similarity matrix between assets, $\mathbf 1$ is the all-ones vector, and $\alpha$ and $\beta$ are empirical constants. Following \cite{qubo-kmedoids}, we define
\begin{equation} \label{eq:C}
    C_{ij} = 1 - \exp \left(-(1 - \text{Corr}(i,j)) \right).
\end{equation}
This problem searches for a subset $x$ of $k$ assets that are mutually dissimilar (large $x^T C x$) while remaining representative of the full universe (small $\mathbf 1^T C x$). Since $x \in \{0,1\}^n$, the linear term can be absorbed into the diagonal of the quadratic form. As $C$ is symmetric, define
\begin{equation}\label{eq:hatC}
    \hat C = \beta \cdot \text{Diag}(C \cdot \mathbf 1) - \frac{\alpha}{2} C,
\end{equation}
where $\text{Diag}(C \cdot \mathbf 1)$ is the diagonal matrix whose $i$-th diagonal entry is the $i$-th entry of the vector $C \cdot \mathbf 1$. Then the problem is equivalently written as
\begin{equation}
    \min_x \ \left\{ x^T \hat C x : x \in \{0, 1\}^n, \mathbf{1}^T  x= k \right\}.
\end{equation}
Unlike \cite{marketgraph-digital-annealing}, we do not model the problem as a QUBO with penalty terms forcing $\mathbf 1^T x = k$. Rather, we enforce the cardinality constraint through the quantum setting: the $k$-excitation subspace contains the initial state and is invariant under both the cost and the mixer Hamiltonian.

\section{MISDP formulation for SWAP-free QAOA}

Our goal in this section is to formalize the problem of designing a SWAP-free QAOA circuit as a Mixed-Integer Semidefinite Program (MISDP). The key idea is to alter the matrix $C$ defined in \eqref{eq:cardinality-constrained} so that the cost Hamiltonian $H_C$ only includes interactions between qubits that are directly connected in the hardware graph, where onwards defined as $G=(V_G,E_G)$. This means that we want to find a modified cost matrix $\tilde{C}$ such that $\tilde{C}_{ij} = 0$ for all $ij \notin E_G$, while also ensuring that $\tilde{C}$ is as close as possible to the original cost matrix $C$ in terms of the objective function value in \eqref{eq:cardinality-constrained}. 

Let $A_G$ denote the $01$-symmetric adjacency matrix of the graph $G$, which we assume has no vertex loops. Let $L_G$ denote the Laplacian matrix of the graph, i.e., $L_G = D_G - A_G$, where $D_G$ is diagonal with the degrees of the vertices in their respective diagonals. Let $\overline{G}$ denote the complement graph, thus $A_G = \mathbf{1}\mathbf{1}^T - A_{\overline{G}} - I$, where $I$ is the identity matrix. A permutation matrix is a $01$-square matrix with exactly one entry of $1$ in each row and each column, and $0$s elsewhere. Note that permutation matrices are precisely the integral doubly stochastic matrices. For matrices $M$ and $N$ of the same order, recall that $M \circ N$ denotes their Hadamard (entrywise) product, i.e., $(M\circ N)_{ij} = M_{ij} N_{ij}$.

We can then express the problem of finding $\tilde{C}$ as the following mixed-integer semidefinite optimization problem. This formulation features a solution that enforces the hardware topology while tackling the qubit allocation problem, as the permutation matrix $P$ encodes a relabeling of the qubits.

\begin{align}
    \min_{\lambda,X,P} \ \  & \lambda \label{eq:misdp}\\
    \text{s.t. } & \lambda \in \mathbb{R}_+, \, X \in \mathbb{R}^{V_G \times V_G}, \, P \in \mathbb{R}^{V_G \times V_G}, \nonumber \\ 
    & \lambda I \succeq X -  \hat C  \succeq -\lambda I, \nonumber \\
    & X \circ (P \cdot A_{\overline{G}} \cdot P^T) = 0, \nonumber \\
    & P \text{ is a permutation matrix}. \nonumber
\end{align}

We discuss the formulation \eqref{eq:misdp} in detail below.

\begin{proposition}\label{prop:misdp}
    Formulation \eqref{eq:misdp} is a polynomially-sized mixed integer semidefinite program.
\end{proposition}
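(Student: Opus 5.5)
The plan is to rewrite each constraint of \eqref{eq:misdp} so that it has one of three allowed forms: a linear matrix inequality in continuous variables, a linear equality, or an integrality requirement. I will then count variables and constraints. The only nonlinearity is the bilinear term $X \circ (P A_{\overline{G}} P^T)$: it is linear in $X$ but quadratic in $P$, and multiplying by $X$ makes it cubic. Linearizing it is the main step.

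\textbf{Easy constraints.} The condition $\lambda I \succeq X - \hat C \succeq -\lambda I$ splits into two LMIs, $\lambda I - X + \hat C \succeq 0$ and $X - \hat C + \lambda I \succeq 0$. Both are affine in $(\lambda, X)$. Since $\hat C$ is symmetric and the constraint only constrains the symmetric part, we may take $X$ symmetric. Writing $n = |V_G|$, each LMI is of size $n\times n$, or one block-diagonal LMI of size $2n$. The condition that $P$ is a permutation matrix becomes doubly stochastic linear constraints, $P\mathbf 1 = \mathbf 1$, $P^T\mathbf 1 = \mathbf 1$, $P \ge 0$, together with $P_{ij}\in\{0,1\}$. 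This uses the fact, noted above, that permutation matrices are exactly the integral doubly stochastic matrices. That gives $n^2$ integer variables and $O(n^2)$ linear constraints.

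\textbf{Linearizing the Hadamard constraint.} I will introduce auxiliary binaries $Q = P A_{\overline{G}} P^T$, or simply note that $Q_{ij} = \sum_{ab\in E(\overline G)} P_{ia}P_{jb}$ is 0/1 when $P$ is a permutation. One route is to add variables $W_{iajb} = P_{ia}P_{jb}$ with standard McCormick constraints. Because $P$ is binary, these constraints are exact: $W \le P_{ia}$, $W\le P_{jb}$, $W \ge P_{ia}+P_{jb}-1$, $W\ge 0$. This uses $O(n^4)$ variables, which is still polynomial, and $Q$ becomes a linear function of $W$. Next, $X_{ij}Q_{ij} = 0$ with $Q_{ij}\in\{0,1\}$ is enforced by a big-$M$ constraint, $-M(1-Q_{ij}) \le X_{ij} \le M(1-Q_{ij})$. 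This is a cleaner alternative to the product, since $Q_{ij}=1$ must force $X_{ij}=0$ and $Q_{ij}=0$ leaves $X_{ij}$ free.

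\textbf{Main obstacle.} The hard part is justifying a polynomially-encoded $M$. First, $\lambda = \|\hat C\|$ with $X = 0$ is always feasible, so an optimal $\lambda^*$ satisfies $\lambda^*\le \|\hat C\|_2 \le \|\hat C\|_F$. Second, $\|X-\hat C\|_2\le\lambda$ gives $|X_{ij}| \le |\hat C_{ij}| + \lambda$. Together these show that $M = 2\|\hat C\|_F$, or the bit-size-polynomial bound $2\sum_{ij}|\hat C_{ij}|$, loses no optimal solution. Adding $\lambda \le M$ keeps the feasible region bounded.

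\textbf{Conclusion.} The final program has continuous variables $\lambda, X$, binary $P$ and auxiliary $W$ (or $Q$), LMIs of size $O(n)$, and $O(n^4)$ linear constraints with coefficients of polynomial bit-length in the input. It is therefore a polynomially-sized MISDP with the same optimum as \eqref{eq:misdp}.
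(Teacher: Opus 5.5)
Your argument is correct, but it linearizes a different product than the paper does. The paper never works with the permuted pattern $P A_{\overline{G}} P^T$. Since $P$ is orthogonal, it substitutes $Y = P^T X P$, which turns the Hadamard constraint into the fixed linear constraint $Y \circ A_{\overline{G}} = 0$ and the LMI into $\lambda I \succeq Y - P^T \hat C P \succeq -\lambda I$. The permutation then sits in the data matrix $P^T \hat C P$. Its entries $\sum_{k,\ell} P_{ki}P_{\ell j}\hat C_{k\ell}$ become linear after the same McCormick lifting $q_{ki\ell j} = P_{ki}P_{\ell j}$ that you use.

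Because no continuous variable is ever multiplied by a binary one, the paper needs no big-$M$ constant and none of your bounding argument, and its reformulation is exact on the whole feasible set. Yours preserves the optimal value and all optimal solutions; your estimate $|X_{ij}| \le |\hat C_{ij}| + \lambda^* \le 2\sum_{k,\ell}|\hat C_{k\ell}|$ justifies this correctly. It does, however, cut off feasible points with large $\lambda$.

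What your route buys is that the LMI keeps the constant data $\hat C$ and involves only $(\lambda, X)$, with the permutation entering purely through linear constraints. Once you accept big-$M$, the lifted variables $W$ are in fact unnecessary. The constraints $|X_{ij}| \le M(2 - P_{ia} - P_{jb})$, for all $i,j$ and all $a,b$ with $ab \in E(\overline{G})$, already force $X_{ij}=0$ exactly when $(PA_{\overline{G}}P^T)_{ij}=1$.
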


\begin{proof}
    We begin by examining an equivalent form of \eqref{eq:misdp}. Note that the constraint $X \circ (P \cdot A_{\overline{G}} \cdot P^T) = 0$ is equivalent to $(P^T \cdot X \cdot P) \circ A_{\overline{G}} = 0$, and by relabeling $(P^T \cdot X \cdot P)$ as $Y$, we note that
\[ \lambda I \succeq X -  \hat C  \succeq -\lambda I \iff \lambda I \succeq Y -  (P^T \cdot \hat C \cdot P) \succeq -\lambda I.
\]
It follows that
\[ (P^T \cdot \hat C \cdot P)_{ij} = \sum_{k,l} P_{ki} P_{\ell j} \hat C_{k \ell},
\]
and with the intent of replacing $P_{ki} P_{\ell j}$ by variables $q_{ki  \ell j}$, we may conclude that $P$ is a permutation matrix if and only if
\begin{enumerate}
    \item $\sum_k q_{kiki} = 1$ for all $i$,
    \item $\sum_i q_{kiki} = 1$ for all $k$,
    \item $q_{ki \ell j} \leq q_{kiki}, q_{ki\ell j} \leq q_{\ell j \ell j}$ for all $i,j,k,\ell$,
    \item $q_{ki \ell j} \geq q_{kiki} + q_{\ell j \ell j} - 1$ for all $i,j,k,\ell$,
    \item $0 \leq q_{ki \ell j} \leq 1$ for all $i,j,k,\ell$,
    \item $q_{kiki} \in \{0,1\}$ for all $i,k$.
\end{enumerate}
Thus, we may rewrite \eqref{eq:misdp} as the following optimization problem. It is a mixed-integer semidefinite program, as it features a linear objective function, linear matrix inequalities, and linear constraints on the variables, including integrality constraints on some variables, and it has a number of variables and constraints that is polynomial in the size of the input data.
\begin{align}
    \min_{\lambda,Y,q} \ \  & \lambda \label{eq:misdp2}\\
    \text{s.t. } & \lambda \in \mathbb{R}_+, \, Y \in \mathbb{R}^{V_G \times V_G}, \, q \in \mathbb{R}^{V_G \times V_G \times V_G \times V_G}, \nonumber \\ 
    & \lambda I \succeq Y -  q(\hat C)  \succeq -\lambda I, \nonumber \\
    & Y \circ A_{\overline{G}} = 0, \nonumber \\
    & q \text{ satisfies conditions (i) to (vi) above}, \nonumber \\
    & q(\hat C)_{ij} = \sum_{k,l} q_{ki \ell j} \hat C_{k \ell} \text{ for all } i,j. \nonumber
\end{align}
\end{proof}

The optimal value found for $\lambda$ also bounds the deviation from the optimal value of the quadratic optimization problem \eqref{eq:cardinality-constrained} when using $\tilde C$ instead of $\hat C$ to find an optimal solution. In particular, if $x$ is an optimal solution for the original problem and $\tilde x$ is an optimal solution for the modified problem, then
\begin{equation}
     x^T \hat C x \leq \tilde x ^T  \hat C \tilde x \leq \tilde x^T \tilde C \tilde x + \lambda k \leq x^T \tilde C x + \lambda k \leq x^T \hat C x + 2 \lambda k.
\end{equation}

A solution for the MISDP \eqref{eq:misdp} has $\lambda = 0$ if and only if $\tilde C =  \hat C$. As a consequence, one can reduce the subgraph isomorphism problem (see \cite{garey1977rectilinear}) to the decision version of \eqref{eq:misdp}, yielding the following result.

\begin{proposition}
    The decision problem which asks whether there exists a solution of \eqref{eq:misdp} whose optimal value is at most a given $\lambda_0$ is NP-complete. \qed
\end{proposition}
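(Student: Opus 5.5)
The proof has two parts: membership in NP and NP-hardness. The hardness direction uses a reduction from subgraph isomorphism, as the paragraph preceding the statement suggests.

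\emph{Membership in NP.} The certificate is the permutation matrix $P$. For fixed $P$, the remaining problem is a plain semidefinite program in $(\lambda, X)$. It minimizes $\lambda$ subject to $-\lambda I \preceq X - \hat C \preceq \lambda I$, with $X$ supported on $P A_G P^T$ plus the diagonal. This is the nearest-spectral-norm approximation of $\hat C$ by a matrix with a prescribed sparsity pattern.

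A verifier cannot just solve an SDP exactly in polynomial time, so I would handle exactness in one of two ways. One option is to also include a rational $X$ of polynomial bit length in the certificate; the check then reduces to testing positive semidefiniteness of $\lambda_0 I \pm (X - \hat C)$, which can be done exactly over the rationals by an $LDL^T$ factorization. The other option is to adopt the standard convention for MISDP decision problems, namely the real-number or approximate-feasibility model.

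I expect this step to be the main obstacle. Optimal SDP solutions can be irrational, or have exponential size, so strict membership in NP requires either a bit-size bound on some feasible $X$ when one exists or such a modelling convention. I would state the convention explicitly. Nothing is lost for hardness, since the reduction uses only $\lambda_0 = 0$, where no such issue arises.

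\emph{NP-hardness.} The reduction is from subgraph isomorphism: given graphs $H$ and $G$ with $|V_H| \le |V_G| = n$, decide whether $G$ contains a subgraph isomorphic to $H$.

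First pad $H$ with isolated vertices until it has $n$ vertices. Then set $\hat C = A_H$, or any symmetric matrix whose off-diagonal support is exactly $E_H$, and set $\lambda_0 = 0$.

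A feasible solution with $\lambda = 0$ forces $X = \hat C$. The Hadamard constraint then says $\hat C_{ij} = 0$ whenever $ij \in E(P A_{\overline G} P^T)$. Equivalently, every edge of $H$ is mapped by $P^T$ onto an edge of $G$. This is exactly the statement that the bijection given by $P$ embeds $H$ as a (not necessarily induced) subgraph of $G$.

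Conversely, an embedding of $H$ into $G$, extended to a bijection on $V_G$ using the padded isolated vertices, yields a permutation $P$ for which $X = \hat C$ and $\lambda = 0$ are feasible. The construction is clearly polynomial.

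Note that the diagonal of $\hat C$ is unconstrained, because $A_{\overline G}$ has zero diagonal; so the diagonal causes no difficulty in either direction.

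Combining the two parts gives NP-completeness, subject to the representation convention in the membership step.
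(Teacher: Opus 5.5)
Your hardness argument is the paper's argument, carried out correctly. The paper's entire justification is the remark that $\lambda=0$ forces $X=\hat C$, so with $\lambda_0=0$ the question becomes whether some relabeling sends every off-diagonal nonzero of $\hat C$ onto an edge of $G$, i.e.\ subgraph isomorphism; you handle the padding, the direction of $P^T$ and the free diagonal properly.

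The paper says nothing about membership in NP, so your caveat goes beyond it and is well-founded. The cleanest rigorous statement is that the $\lambda_0=0$ case is already NP-complete with $P$ alone as certificate. And since in no-instances a forced $X_{ij}=0$ against $\hat C_{ij}=1$ gives $\lambda^*\ge 1$, the hardness also survives an approximate-feasibility model.
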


An important feature of the formulation \eqref{eq:misdp2} is that it does not require us to explicitly enumerate all possible permutation matrices $P$, rather, it allows us to use mixed-integer semidefinite programming techniques, in particular, branch-and-bound algorithms, to search for the optimal permutation matrix implicitly through the variables $q$.

Upon fixating the matrix $P$ (and henceforth omitting it), the problem \eqref{eq:misdp} is an SDP, which admits a dual formulation:
\begin{align}
    \max_{M,N} \ \  & \langle M-N, \hat C \rangle \label{eq:misdp3}\\
    \text{s.t. } & M,N \in \mathbb{R}^{V_G \times V_G}, \nonumber \\ 
    & \langle M+N , I \rangle \leq 1, \nonumber \\
    & (M-N) \circ (I + A_G) = 0, \nonumber \\
    & M,N \succeq 0. \nonumber
\end{align}

The primal formulation \eqref{eq:misdp} (with $P = I$) satisfies Slater's condition (see \cite{shapiro2000duality}) with the feasible solution $(\lambda,X) = (\lambda_0, 0)$ for any $\lambda_0 > \lVert C \rVert_{\text{op}}$, hence strong duality holds between \eqref{eq:misdp} and \eqref{eq:misdp3}, and the optimal values of both problems are equal. We may also note that if $Y = M-N$ in \eqref{eq:misdp3}, then among all decompositions of $Y$ as a difference of positive semidefinite matrices $M'-N'$, the one that minimizes $\langle M'+N', I \rangle$ is precisely the decomposition given by the positive and negative parts of $Y$, i.e., $M = Y_+$ and $N = Y_-$. Let $|Y| = Y_+ + Y_-$. Thus, we may rewrite \eqref{eq:misdp3} as
\begin{align}
    \max_{Y} \ \  & \langle Y, \hat C \rangle \label{eq:misdp4}\\
    \text{s.t. } & Y \in \mathbb{R}^{V_G \times V_G}, \nonumber \\ 
    & \langle |Y| , I \rangle \leq 1, \nonumber \\
    & Y \circ (I + A_G) = 0. \nonumber
\end{align}

Given the target hardware graph $G$ (assuming $P$ has been implicitly determined) and the cost matrix $\hat C$, we may upper bound the optimum value of \eqref{eq:misdp} by the feasible solution
\[ X = \hat C \circ (I+A_G), \quad \lambda = \lVert \hat C \circ A_{\overline{G}} \rVert_{\text{op}} = \frac{\alpha}{2} \lVert C \circ A_{\overline{G}} \rVert_{\text{op}}.
\]
The following result offers an alternative bound, which is independent of the permutation matrix, and is instead expressed in terms of the Lovász number of the graph $G$ (see \cite{lovasz1979shannon}).  

\begin{theorem} \label{thm:upper-bound}
    Given $\hat C$ as in \eqref{eq:hatC}, define $L = \text{Diag}(C \cdot \mathbf{1}) - C$. The optimum value $\lambda^*$ of \eqref{eq:misdp} satisfies
    \[\lambda^* \leq \left(1 - \frac{1}{\vartheta(G)} \right) \frac{\alpha}{2} \langle L,L \rangle.,\]
    where $\vartheta(G)$ is the Lovász number of the graph $G$.
\end{theorem}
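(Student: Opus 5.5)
We would first reduce to the dual with a fixed permutation and then bound the dual objective spectrally. Since $\lambda^*$ is a minimum over all permutation matrices, it suffices to take $P = I$. Strong duality between \eqref{eq:misdp} and \eqref{eq:misdp3} was noted above, so $\lambda^*$ is at most the optimum of \eqref{eq:misdp4}. The first step is to simplify the dual objective. Any feasible $Y$ satisfies $Y \circ (I + A_G) = 0$, so $Y$ has zero diagonal and is supported on the edges of $\overline{G}$. Hence $\langle Y, \text{Diag}(C\mathbf 1)\rangle = 0$, and
\[
\langle Y, \hat C\rangle = -\frac{\alpha}{2}\langle Y, C\rangle = \frac{\alpha}{2}\langle Y, L\rangle,
\]
because $L$ and $-C$ agree off the diagonal. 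It therefore suffices to show that $\langle Y, L\rangle \le (1 - 1/\vartheta(G))\langle L, L\rangle$ for every $Y$ that is supported on $E(\overline G)$, has zero diagonal, and satisfies $\operatorname{tr}|Y| \le 1$.

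The second step is a spectral lemma on such $Y$, and this is where $\vartheta(G)$ enters. Let $\mu_1 \ge \dots \ge \mu_n$ be the eigenvalues of $Y$. Since $\operatorname{tr} Y = 0$ and $\operatorname{tr}|Y| \le 1$, we have $\operatorname{tr} Y_+ = \operatorname{tr} Y_- \le 1/2$. Set $t = -\mu_n$. Then $B = tI + Y$ is positive semidefinite, has constant diagonal $t$, and satisfies $B_{ij} = 0$ for $ij \in E(G)$. This is exactly a feasible matrix, after normalising by $\operatorname{tr} B = nt$, for the Lov\'asz program $\vartheta(G) = \max\{\langle J, B\rangle : B \succeq 0,\ \operatorname{tr} B = 1,\ B_{ij} = 0 \text{ for } ij \in E(G)\}$. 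Combining this with its dual eigenvalue characterisation, we expect a Hoffman-type inequality $\mu_1 \le (\vartheta(G) - 1)\,|\mu_n|$. Together with the trace constraints, this bounds how concentrated the spectrum of $Y$ can be, and we aim to conclude $\|Y\|_F^2 \le (1 - 1/\vartheta(G))^2$, or at least $\|Y\|_F \le 1 - 1/\vartheta(G)$.

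The final step combines the two. By Cauchy--Schwarz, $\langle Y, L\rangle \le \|Y\|_F\,\|L\|_F \le (1 - 1/\vartheta(G))\,\|L\|_F$. The stated bound is in terms of $\langle L, L\rangle = \|L\|_F^2$, so we need $\|L\|_F \le \|L\|_F^2$, i.e.\ $\|L\|_F \ge 1$. This holds whenever $C$ has an off-diagonal entry of modulus at least $1/\sqrt 2$, which is the generic situation for the similarity matrix \eqref{eq:C}. Otherwise we would use the sharper statement with $\|L\|_F$, which implies the theorem whenever $\|L\|_F \ge 1$. An alternative route replaces Cauchy--Schwarz by $\langle Y, L\rangle \le \operatorname{tr}|Y| \cdot \|L \circ A_{\overline G}\|_{\text{op}}$, with the $\vartheta$ factor coming from the support structure of $Y$.

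The main obstacle is the spectral lemma in the second step. We must extract the precise factor $1 - 1/\vartheta(G)$ from the facts that $Y$ vanishes on the diagonal and on $E(G)$. We would first try to write $\langle Y, L\rangle$ as a difference of two psd pairings, $\langle Y_+, L\rangle - \langle Y_-, L\rangle$, and to use the matrix $B = tI + Y$ in the orthonormal-representation form of $\vartheta(G)$. The goal would be to show that the largest achievable "positive mass" $\langle Y_+, L\rangle$ is at most a $(1 - 1/\vartheta)$ fraction of what an unconstrained trace-one matrix could achieve.
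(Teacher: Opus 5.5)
\textbf{There is a genuine gap in your second step.} The inequality it targets, $\lVert Y\rVert_F\le 1-1/\vartheta(G)$, is false, so that step cannot be completed.

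\emph{Counterexample.} Take $G=P_3$ (the path $1$--$2$--$3$), for which $\vartheta(G)=2$, and $Y=\tfrac12(e_1e_3^T+e_3e_1^T)$. This $Y$ has zero diagonal, is supported on the non-edge $13$, and has eigenvalues $\pm\tfrac12$, so $\operatorname{tr}|Y|=1$. Yet $\lVert Y\rVert_F=1/\sqrt2>\tfrac12=1-1/\vartheta(G)$.

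More generally, your own observation $\operatorname{tr}Y_\pm\le\tfrac12$ gives $\lVert Y\rVert_F^2\le\tfrac14+\tfrac14=\tfrac12$. A single-edge $Y$ attains this for every non-complete $G$, so the support of $Y$ carries no $\vartheta$-dependence at all. The Hoffman-type inequality $\mu_1\le(\vartheta(G)-1)|\mu_n|$ is true (one direction of Lov\'asz's eigenvalue characterisation of $\vartheta$), but it is vacuous here. Indeed, $\vartheta(G)$ is at least the independence number of $G$, hence at least $2$ for non-complete $G$, while the single-edge $Y$ has $\mu_1=|\mu_n|$. The same objection applies to your alternative route insofar as it looks for the factor in the support of $Y$.

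\textbf{The missing idea, which is the paper's route, is to extract the graph factor from $L$.} With $A=A_{\overline G}$ and $Y=Y\circ A$, Cauchy--Schwarz gives $\langle Y,L\rangle=\langle Y,L\circ A\rangle\le\lVert Y\rVert_F\,\lVert L\circ A\rVert_F$. Moreover, $\lVert L\circ A\rVert_F^2=\langle A,L\circ L\rangle\le(1-1/\chi_{\text{vec}}(\overline G))\lVert L\rVert_F^2$ by \cite[Lemma~5]{coutinho2024conic}.
\begin{itemize}
\item That lemma applies because $L\circ L$ is positive semidefinite (Schur product theorem) and entrywise nonnegative.
\item To see it, pair $L\circ L$ with the Gram matrix $U$ of a vector colouring of $\overline G$. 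This gives $0\le\langle U,L\circ L\rangle\le\operatorname{tr}(L\circ L)-\frac{1}{\chi_{\text{vec}}(\overline G)-1}\langle A,L\circ L\rangle+\langle A_G,L\circ L\rangle$, which rearranges to the claim.
\item One then uses $\chi_{\text{vec}}(\overline G)\le\vartheta(G)$.
\end{itemize}
The paper's display also applies this factor to $Y\circ Y$, which cannot hold: $Y\circ Y$ is supported on $A$, so $\langle A,Y\circ Y\rangle=\langle J,Y\circ Y\rangle$. What legitimately upgrades the resulting $\sqrt{1-1/\vartheta(G)}$ to $1-1/\vartheta(G)$ is precisely your bound $\lVert Y\rVert_F^2\le\tfrac12$, since $\sqrt{(1-1/\vartheta)/2}\le 1-1/\vartheta$ for $\vartheta\ge 2$ (and $Y=0$ if $G$ is complete).

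\textbf{Your caveat in the last step is not a technicality.} The argument yields $\lambda^*\le(1-1/\vartheta(G))\frac{\alpha}{2}\lVert L\rVert_F$, and so does the paper's, whose chain bounds $\langle Y,\hat C\rangle^2$. The stated version with $\langle L,L\rangle$ is quadratic in $C$ while $\lambda^*$ is linear, and it fails for small $C$.

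For example, take $G=P_3$, $\alpha>0$, and three assets with all pairwise correlations $0.99$. Then $C_{ij}=c=1-e^{-0.01}$ off the diagonal. Every admissible $X$ vanishes at some off-diagonal position, and $\lVert X-\hat C\rVert_{\text{op}}$ dominates the modulus of every entry, so $\lambda^*\ge\frac{\alpha}{2}c$. The right-hand side is $\frac12\cdot\frac{\alpha}{2}\cdot 18c^2=\frac92\alpha c^2$, which is smaller because $c<1/9$. So the $\lVert L\rVert_F$ form is the correct target, and the assumption $\lVert L\rVert_F\ge 1$ cannot be dropped.
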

\begin{proof}
    A special feature of our original problem \eqref{eq:hatC} is that $C \geq 0$ (entrywise), hence if $L = \text{Diag}(C \cdot \mathbf{1}) - C$, then 
    \begin{equation} \label{eq:hatC-decomposition}
        \hat C = \left( \beta - \frac{\alpha}{2}\right) D + \frac{\alpha}{2} L,
    \end{equation}
    where $D$ is diagonal and $L$ is a Laplacian matrix, i.e., a matrix with nonpositive off-diagonal entries and zero row sums. In particular, $L$ is positive semidefinite. In order to give the desired upper bound, let $Y$ be an optimal solution for \eqref{eq:misdp4}. Recall that, for any two matrices $S$ and $T$,
    \begin{equation} \label{eq:inner-product}
       \langle S,T \rangle = \text{sum of all entries}(S \circ T).
    \end{equation}
    Let $A = A_{\overline{G}}$. Then
\begin{align}
    \langle Y, (2/\alpha)\hat C \rangle^2 & = \langle Y , L \rangle^2 \label{eq:proof1} \\ 
    & = \langle Y \circ A, L \circ A \rangle^2 \label{eq:proof2} \\ 
    & \leq \langle Y \circ A, Y \circ A \rangle \langle L \circ A, L \circ A \rangle  \label{eq:proof3} \\ 
    & = \langle A, Y \circ Y \rangle \langle A, L \circ L \rangle \label{eq:proof4} \\ 
    & \leq \left(1 - \frac{1}{\chi_\text{vec}(\overline{G})} \right)^2 \langle J, Y \circ Y \rangle \langle J, L \circ L \rangle \label{eq:proof5} \\ 
    & = \left(1 - \frac{1}{\chi_\text{vec}(\overline{G})} \right)^2 \langle Y, Y \rangle \langle L,L \rangle \label{eq:proof6} \\ 
    & \leq \left(1 - \frac{1}{\chi_\text{vec}(\overline{G})} \right)^2 \langle L,L \rangle \label{eq:proof7}
\end{align}
Line \eqref{eq:proof1} follows from the fact that $Y$ has zero diagonal and equation~\eqref{eq:hatC-decomposition}. Line \eqref{eq:proof2} follows from the fact that $Y \circ (I + A_G) = 0$ and equation~\eqref{eq:inner-product}. Line \eqref{eq:proof3} is the Cauchy-Schwarz inequality. Line \eqref{eq:proof4} follows from equation~\eqref{eq:inner-product}. Line \eqref{eq:proof5} follows from \cite[Lemma 5]{coutinho2024conic}. Line \eqref{eq:proof6} follows from equation~\eqref{eq:inner-product}. Finally, line \eqref{eq:proof7} follows from the constraint $\langle |Y|, I \rangle \leq 1$. For our last step, we recall that $\chi_\text{vec}(\overline{G}) \leq \vartheta(G)$ (see for instance \cite{karger1998approximate}).
\end{proof}

\section{Heuristic solutions to the MISDP}
\label{sec:heuristic-results}

We compared seven strategies for computing the permutation matrix \(P\) before solving the MISDP: \emph{Perron Disconnected}, \emph{Perron Connected}, \emph{Laplacian Connected}, \emph{Completely/Partially Random Disconnected}, and \emph{Completely/Partially Random Connected}. We discuss their formulation below, followed by the experimental results.

\subsection{Heuristic strategies}

To handle large instances of the MISDP \eqref{eq:misdp}, we consider heuristic strategies in which the permutation matrix \(P\) is fixed in advance, so that only a single SDP needs to be solved for each candidate permutation.

Let \(G\) be the hardware graph on \(n\) vertices, and let \(A_G\) be its adjacency matrix. Let \(v\) be a Perron eigenvector of \(A_G\), that is, an eigenvector associated with its largest eigenvalue. Likewise, let \(u\) be a Perron eigenvector of \(C\) (which plays the role of the adjacency matrix in equation \eqref{eq:hatC}). Define \(\pi\) as a permutation such that
\[
v_{\pi(1)} \ge v_{\pi(2)} \ge \cdots \ge v_{\pi(n)},
\]
and define \(\sigma\) analogously from the coordinates of \(u\). Thus, \(\pi\) and \(\sigma\) rank the vertices of \(G\) and the indices of \(\hat C\), respectively, according to Perron centrality (see for instance \cite{martin2014localization}).

Our first heuristic matches the most central indices of \(C\) to the most central vertices of \(G\), without imposing any connectivity requirement. More precisely, we define \(P\) as the permutation matrix corresponding to \(\sigma\pi^{-1}\), and solve the MISDP with this choice fixed.

This approach is motivated by the idea that large Perron-vector coordinates identify the most influential vertices in each structure. However, a possible drawback is that the vertices of \(G\) selected in this way need not induce a connected subgraph. Consequently, indices \(i,j\) with large values of \(\hat C_{ij}\) may be mapped to non-adjacent qubits, forcing the corresponding entries of the embedded matrix to vanish.

To mitigate this issue, we present a connectivity-aware variant. The heuristic starts by assigning the most important index of \(C\), namely \(\sigma(1)\), to the most important vertex of \(G\), namely \(\pi(1)\). Then, at each subsequent step, it assigns the next index in the order induced by \(\sigma\) to the highest-ranked unused vertex lying in the neighborhood of the already selected set.

Formally, let \(S\subseteq V(G)\) denote the set of already assigned vertices. Initialize
\[
P[\pi(1)] \leftarrow \sigma(1), 
\qquad
S \leftarrow \{\pi(1)\}.
\]
For \(i=2,\dots,n\), choose
\[
v \in \arg\min \{ \pi^{-1}(w) : w \in N(S)\setminus S\},
\]
and set
\[
P[v] \leftarrow \sigma(i), 
\qquad
S \leftarrow S \cup \{v\}.
\]
Since each newly selected vertex lies in \(N(S)\), the set \(S\) remains connected throughout the procedure. In this way, the most important indices of \(C\) are mapped to a connected region of the hardware graph.

\begin{algorithm}
    \SetAlgoLined
    \KwData{Permutations \(\pi,\sigma\)}
    \KwResult{Permutation \(P\)}
    \(P[\pi(1)] \leftarrow \sigma(1)\)\;
    \(S \leftarrow \{\pi(1)\}\)\;
    \For{\(i = 2,\dots,n\)}{
        \(v \leftarrow \arg\min \{\pi^{-1}(w) : w \in N(S)\setminus S\}\)\;
        \(P[v] \leftarrow \sigma(i)\)\;
        \(S \leftarrow S \cup \{v\}\)\;
    }
    \caption{Connectivity-aware heuristic permutation}
    \label{alg:heuristic}
\end{algorithm}

The \emph{Laplacian Connected} heuristic compares the order induced by the eigenvector corresponding to the largest eigenvalue of \(L_G\), the Laplacian matrix of the graph, and that of \(\hat C\).

To benchmark the eigenvector-based strategies, we also consider four random baselines. In both cases, we generate \(m\) random permutations and solve the corresponding SDP for each of them, keeping the solution with minimum value of \(\lambda\).

In \emph{Completely/Partially Random Disconnected}, the permutation is sampled uniformly at random, with no connectivity requirement. In \emph{Completely/Partially Random Connected}, the permutation is constructed incrementally in the same spirit as Algorithm~\ref{alg:heuristic}, but replacing the Perron-based priority rule by random choices while preserving connectivity of the selected vertices.

The difference between completely and partially random is that the former generates a random permutation matrix for both the graph matrix and the $\hat C$ matrix to ensure a completely random matching between entries, whilst the latter uses the Perron eigenvector to sort the matrix $\hat C$. No significant differences were found in the results for the four procedures.

\subsection{Heuristic comparison methodology}

We first worked to compare the heuristics for small instances of the problem, where the Brute Force procedure is computationally available. This allows us to evaluate how close the heuristics come to minimizing the MISDP objective, and how this translates into optimality gaps for the original problem.

To benchmark the heuristics against the Brute Force solution, we restrict to instances with \(n=8\), for which all \(n!\) permutations can be enumerated. We generated \(22{,}500\) random graph instances, with densities \(d=0.1,0.2,\dots,0.9\) distributed uniformly across the sample. For each instance, a correlation matrix of size \(6\times 6\) was randomly sampled from the S\&P 500 database, and the portfolio size was fixed at \(k=2\).

To evaluate the quality of the proposed heuristics, we use the \emph{optimality gap}, a standard metric for minimization problems, defined by
\[
    \text{optimality gap}
    =
    \frac{
        \text{algorithm value}
        -
        \text{optimal value}
    }{
        \text{optimal value}
    }.
\]

The algorithm value is defined as follows: for each matrix $X$ resulting from the SDP in equation \eqref{eq:misdp} coming from an heuristic, we brute force sort all the possible combinations of $k$ choices of assets. We then select the top $1\%$ $k$-choices on the matrix $X$ and extract the optimal value among them applied to $\hat C$. We chose to do so as one cannot extrapolate the fitness of a solution in the altered matrix to the original one, even though we have a theoretical approximation bound, so we consider a pool of the best candidates.

We also report the normalized $\lambda$ objective value,
\[
    \lambda / \|\hat C\|,
\]
where \(\lambda\) is the objective value returned by the MISDP relaxation.

As mentioned, on small instances we also compute a \emph{Brute Force} benchmark, which evaluates all permutations \(P\) and returns the one with minimum \(\lambda\).

\subsection{Heuristic performance for small graphs}

Figure~\ref{fig:n_8_heuristics} reports the normalized lambda value and the optimality gap for five heuristics---Perron Disconnected, Perron Connected, Laplacian Connected, Partially Random Disconnected, and Partially Random Connected---together with the Brute Force benchmark.

\begin{figure}[ht]
    \centering
    \begin{subfigure}[b]{0.49\textwidth}
        \centering
        \includegraphics[width=\textwidth]{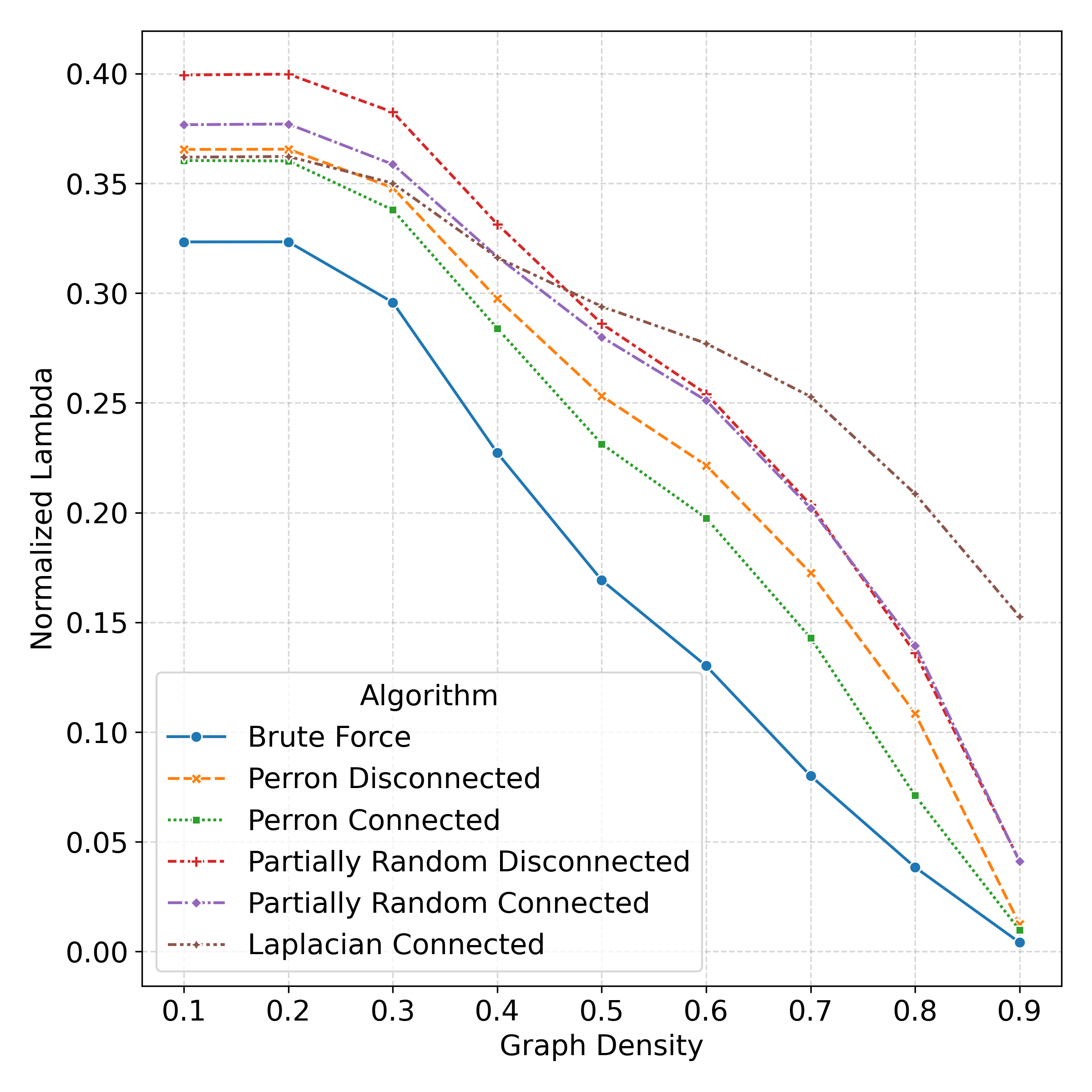}
    \end{subfigure}
    \hfill
    \begin{subfigure}[b]{0.49\textwidth}
        \centering
        \includegraphics[width=\textwidth]{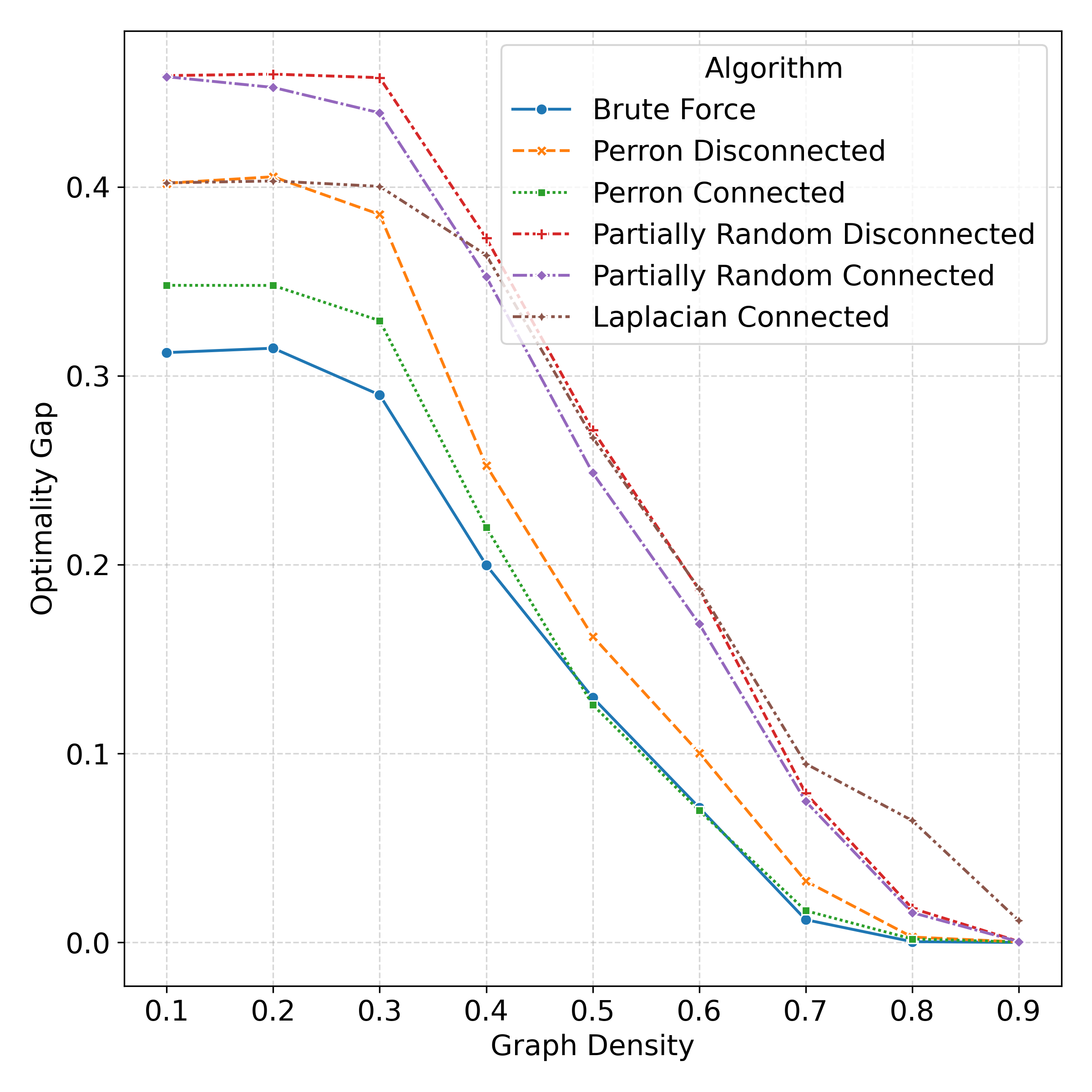}
    \end{subfigure}
    
    \caption{Comparison of heuristics. The plots show how normalized lambda and optimality gap vary with graph density for \(n=8\).}
    \Description{Comparison of heuristics for small graphs. The left plot shows the normalized lambda values, and the right plot shows the optimality gap for different heuristics as a function of graph density for \(n=8\).}
    \label{fig:n_8_heuristics}
\end{figure}

The results indicate that the adjacency Perron-based heuristics generally outperform the random ones on these small instances. Interestingly, the Laplacian Connected heuristic is one of the best performing for low densities values, but its results do not improve as considerably as for the other heuristics, becoming the worst performing overall for high density values. Among the random strategies, enforcing connectivity improves performance, since Random Connected consistently dominates Random Disconnected. In contrast, the difference in the optimality gap between Perron Connected and Perron Disconnected is even more pronounced for low density intervals.

As expected, the Brute Force procedure, which is computationally available at such small instances, yields the lowest normalized lambda values overall, which suggests that the proposed heuristics still leave room for improvement when the sole objective is to minimize the MISDP relaxation. However, the plots also show that lower values of \(\lambda\) do not always translate into significant difference of optimality gaps for mid to high density values.

This indicates that solving the MISDP more accurately is not, by itself, sufficient to guarantee better portfolio solutions. In particular, it suggests that polynomial-time heuristics may still achieve near-optimal solutions for the original problem, even when they do not come close to minimizing the MISDP objective.

\section{Comparison against QAOA with SWAPs}

In this section we evaluate our approach against a baseline representing ideal QAOA execution under SWAP-induced noise regime.

Our findings are summarized in Figure~\ref{fig:opt_gap}, which compares the optimality gap of our heuristics with that of an ideal QAOA with SWAP errors, as a function of the problem size \(n\). 

\begin{figure}[htb]
    \centering
    \centering
    \begin{subfigure}[b]{0.49\textwidth}
        \centering
        \includegraphics[width=\textwidth]{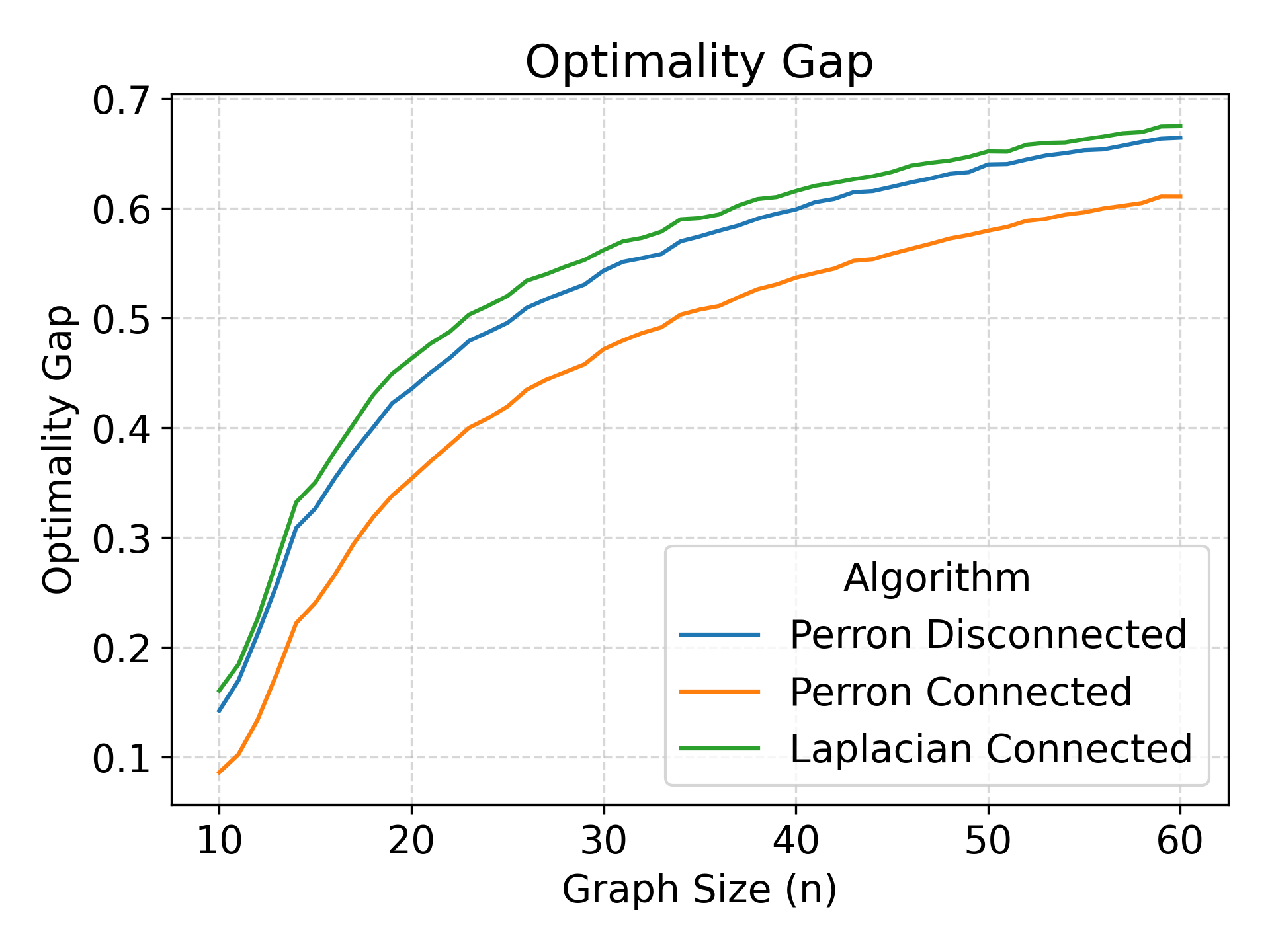}
    \end{subfigure}
    \hfill
    \begin{subfigure}[b]{0.49\textwidth}
        \centering
        \includegraphics[width=\textwidth]{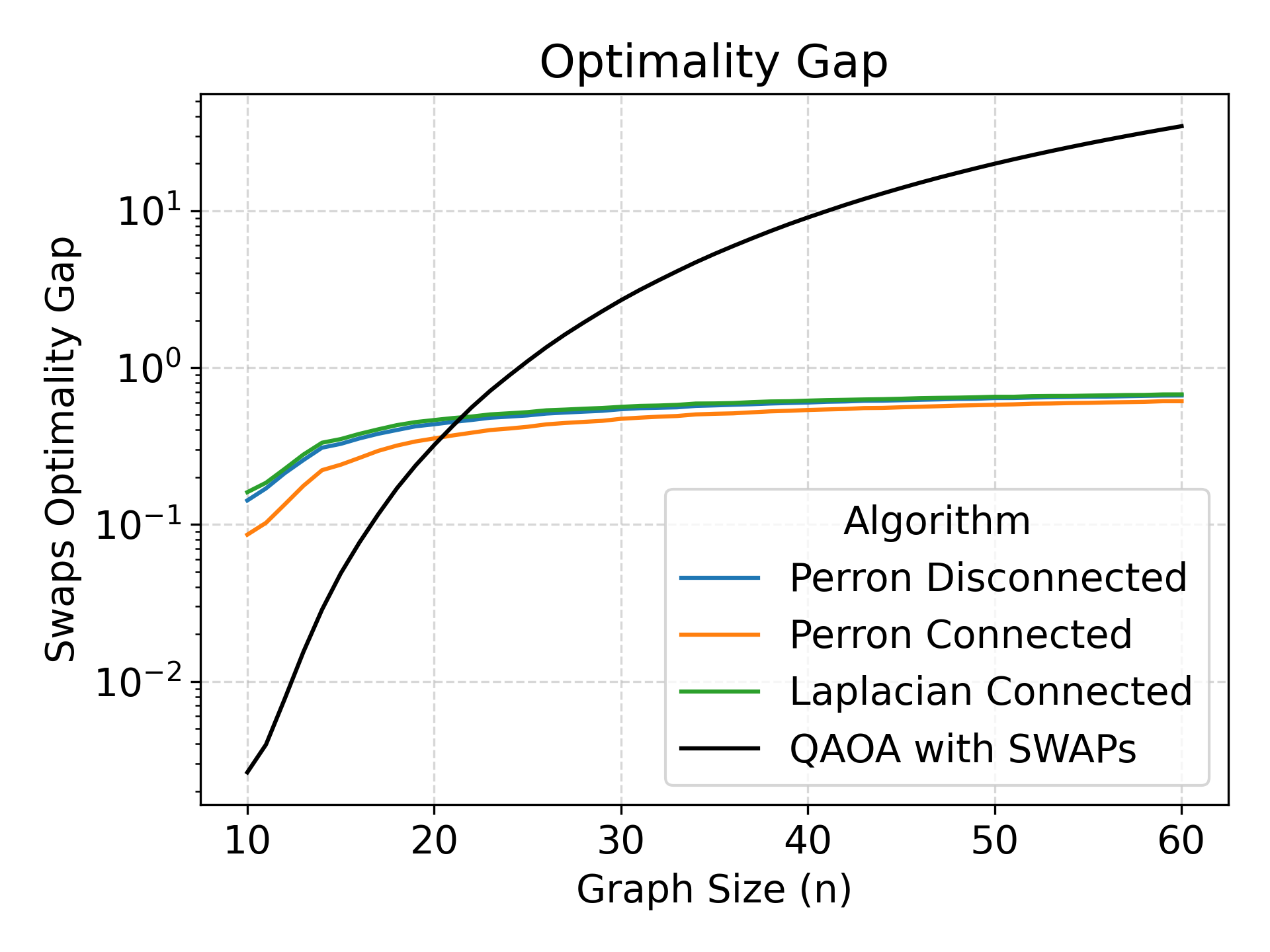}
    \end{subfigure}
    \caption{Optimality gap as a function of \(n\) for four algorithms: Perron Connected, Perron Disconnected, Laplacian Connected, and ideal QAOA with SWAP errors.}
    \Description{Comparison of optimality gaps for different algorithms as a function of problem size \(n\). The plot shows the performance of Perron Connected, Perron Disconnected, Laplacian Connected, and ideal QAOA with SWAP errors.}
    \label{fig:opt_gap}
\end{figure}

For \(n>20\), our methods significantly outperform the SWAP-based baseline. This suggests that, as the problem size grows, avoiding SWAPs, even at the cost of imprecise objective functions, becomes more beneficial than relying on standard qubit-allocation strategies followed by noisy execution. For smaller values of \(n\), the gap is less pronounced. A plausible explanation is that the number of required SWAPs, which grows roughly quadratically with \(n\), is still relatively small in this regime, so the performance degradation caused by SWAP noise remains limited.

\subsection{Experimental details}

For the comparison against QAOA with SWAP-induced noise, we generated random connected graphs with number of vertices ranging from \(n=10\) to \(n=60\). For each value of \(n\), we generated \(4000\) graph instances with edge density \(d=0.5\). For each graph, we sampled a correlation matrix from the S\&P 500 database with number of assets \(m=n-2\), and fixed the portfolio size at \(k=4\).

A difficulty in evaluating qubit-allocation procedures for QAOA is that poor performance may come either from quantum errors that increase with a poor allocation, or from imperfect optimization of the classical parameters \(\beta\) and \(\gamma\) within the QAOA setting. To isolate the effect of qubit allocation, we introduce a metric based on an idealized QAOA convergence assumption.

Let \(\tilde C\) be the approximation of \(\hat C\) produced by one of our heuristics. We assume that QAOA returns a vector \(z\) solving
\[
    \min \; z^T \tilde C z
    \qquad
    \text{s.t. }
    z \in \{0,1\}^n,\quad
    z^T \mathds{1}=k.
\]
The corresponding \emph{algorithm value} is then measured in the original objective, namely \(z^T \hat C z\).

To define a baseline corresponding to standard QAOA with SWAPs, we assume perfect convergence to the ground state of the cost Hamiltonian, but subject to SWAP-induced hardware noise. Specifically, we model the circuit output by the depolarizing channel
\[
    (1-p)\ket{x^*}\bra{x^*} + p \frac{I}{2^n},
\]
where \(x^*\) is the optimal binary solution of the quadratic binary optimization instance and \(p\) is an error probability determined by the number of inserted SWAP gates. In our experiments, we set the CNOT error rate\footnote{For instance, IBM's Miami 2 reported an error rate of 0.638\% (retrieved in February 03, 2026) \cite{IBM-cnot-error}} to \(0.33\%\). Since each SWAP gate is implemented using three CNOTs, we define
\[
    p = 1 - (1-\text{CNOT error rate})^{3\cdot(\text{SWAP count})}.
\]

The SWAP count is estimated using Qiskit's transpilation API, taking the expected number of SWAPs required for one QAOA layer for the given cost Hamiltonian and a random parameter \(\gamma_i\in[0,2\pi)\). Under this model, the expected solution value becomes
\begin{align*}
    &(1-p)(x^*)^T \hat C x^* + \frac{p}{2^n}\sum_x x^T \hat C x \\
    &=
    (1-p)\,\text{optimum value}
    + \frac{p}{4}\bigl(\operatorname{tr}(\hat C)+\operatorname{sum}(\hat C)\bigr).
\end{align*}

\section{Conclusion}

This work proposes a hardware-aware approach to QAOA for NISQ devices. Instead of treating limited qubit connectivity as a compilation issue to be resolved by transpilation, we incorporate the hardware graph directly into the design of the cost Hamiltonian. The resulting framework replaces the original quadratic objective by a hardware-compatible approximation, chosen to eliminate SWAP gates in the cost-layer implementation while remaining close to the target problem.

This approximation task is formulated as a mixed-integer semidefinite program (MISDP), which optimizes both the choice of interactions in the cost Hamiltonian and the mapping of logical variables to physical qubits, aspects which are often studied separately. To solve the MISDP, we propose heuristics based on the spectral properties of the problem and of the hardware graph. 

Our experiments indicate that this trade-off can be advantageous in realistic regimes. In particular, the MISDP-based framework, combined with the proposed permutation heuristics, seemed to yield better results compared with a baseline representing ideal QAOA optimization under SWAP-induced noise. This suggests that, on sparse hardware topologies and under NISQ conditions, preserving the exact objective function is not always preferable if doing so requires a large number of additional two-qubit gates.

Although we focused on the index-tracking problem through a \(k\)-medoids-inspired formulation, the proposed framework is general and can be extended to other quadratic binary optimization problems, including settings with or without cardinality constraints. 

There are several natural directions for future work. On the theoretical side, it would be valuable to derive sharper guarantees connecting the MISDP approximation parameter to the loss in the original objective. On the algorithmic side, more refined heuristics could be investigated, for example by combining spectral information with local search or other combinatorial improvement procedures. On the experimental side, an important next step is to test the framework on real quantum hardware, where device-specific noise and compilation effects may further affect the balance between exactness of the Hamiltonian and SWAP avoidance.

\section*{Acknowledgements}
We acknowledge grant support from CAPES, FAPEMIG and CNPq. This work was funded by a cooperation agreement between the Federal University of Minas Gerais and Banco Inter. We acknowledge Cristiano Arbex, Rodrigo Chaves, Luan Costa, Pedro Faria, Bruno Grossi, Mathias Oliveira, Henrique Soares and Reinaldo Vianna for helpful discussions and suggestions.

\bibliographystyle{plain}
\bibliography{references}

\end{document}